\pgfplotsset{compat=1.12}
\newtheorem{assump}{Assumption}
\newtheorem{remark}{Remark}
\newtheorem{example}{Example}
\newtheorem{lemma}{Lemma}
\newtheorem{theorem}{Theorem}
\newtheorem{proposition}{Proposition}
\definecolor{mycolor4}{RGB}{230,97,1}
\definecolor{mycolor3}{RGB}{178,171,210}
\definecolor{mycolor2}{RGB}{255,153,7}
\definecolor{mycolor1}{RGB}{94,60,153}
\DeclareMathOperator*{\argmin}{arg\,min}
\DeclareMathOperator*{\E}{\mathbb{E}}
\renewcommand{\P}{\mathbb{P}}
\newcommand{\vdiff}[3]{\Delta^{#1}_{#2}{#3}}
\DeclarePairedDelimiter\abs{\lvert}{\rvert}%
\newcommand{\ubar}[1]{\underaccent{\bar}{#1}}
\begin{document}
\begin{frontmatter}
\title{Sensitivity to User Mischaracterizations in Electric Vehicle Charging}
\author[1]{Cesar Santoyo}
\ead{csantoyo@gatech.edu}
\author[2]{Gustav Nilsson}
\ead{gustav.nilsson@epfl.ch}
\author[1,3]{Samuel Coogan}
\ead{sam.coogan@gatech.edu}

\address[1]{School of Electrical and Computer Engineering, Georgia Institute of Technology, Atlanta, 30332, USA.}
\address[2]{School of Architecture, Civil and Environmental Engineering, École Polytechnique Fédérale de Lausanne (EPFL), 1015 Lausanne, Switzerland.}
\address[3]{School of Civil and Environmental Engineering, Georgia Institute of Technology, Atlanta, 30332, USA.}
\fntext[thanks]{C. Santoyo was supported by the NSF Graduate Research Fellowship Program under Grant No. DGE-1650044. This work was supported in part by the NSF under NSF grant 1931980. A preliminary form of this work appears in \citep{santoyo2021sensitivity}. }

\begin{abstract}
In this paper, we consider electric vehicle charging facilities that offer various levels of service, i.e., charging rates, for varying prices such that rational users choose a level of service  that minimizes the total cost to themselves including an opportunity cost that incorporates users' value of time. In this setting, we study the sensitivity of the expected  occupancy at the facility to mischaracterizations of user profiles, e.g., user's value of time, and uncharacterized heterogeneity, e.g., user charging level possibilities, or the likelihood of early departure.  For user profile mischaracterizations, we first provide a fundamental upper bound for the difference between the expected occupancy under any two different distributions on a user's impatience (i.e., value of time) that only depends on the minimum and maximum charging rate  offered by the charging facility. Next, we consider the case when a user's impatience is a discrete random variable and study the sensitivity of the expected occupancy to the probability masses and attained values of the random variable. We show that the expected occupancy varies linearly with respect to the probability masses and is piecewise constant with respect to the attained values. 
Furthermore, we study the effects on the expected occupancy from the occurrence of heterogeneous user populations. In particular, we quantify the effect on the expected occupancy from the existence of sub-populations that may only select a subset of the offered service levels. Lastly, we quantify the variability of early departures on the expected occupancy. These results  demonstrate how the facility operator might design prices such that the expected occupancy does not vary much under small changes in the distribution of a user's impatience,  variable and limited user service needs, or uncharacterized early departure, quantities which are generally difficult to characterize accurately from data. We further demonstrate our results via examples.
\end{abstract}
\end{frontmatter}

\section{Introduction}
Government incentives coupled with market-driven cost reductions \citep{sierzchula2014influence}
of electric vehicles (EVs) have catalyzed EV adoption such that by 2040 it is projected that 58\% of global new vehicle sales will be EVs \citep{bloombergnef}. This increase in EV adoption 
calls for significant investments in commercial charging infrastructure \citep{hoover2021charging, engel2018charging}. As commercial charging facilities become more abundant, there is an increased need to investigate EV charging facility management schemes since it is in the interest of charging facility operators to understand the effects of their operational models on users.

There have been efforts to study the problem from various perspectives such as scheduling and wholesale perspective \citep{sassi2017electric, alizadeh2018retail, le2016optimal} and including a pricing model approach \citep{santoyo2020multi, santoyo2020pricing, zhao2013pricing}. Additional past works have focused on studying EV charging within a utility provider framework where the utility provider and charging facility are separate entities whose pricing engenders specific system-wide behavior \citep{alizadeh2016optimal}. Furthermore, the paper \citep{liu2016distribution} presents a location-based pricing scheme and analyzes its effects on system-wide congestion. The paper~\citep{bae2011spatial} considers a spatiotemporal, queuing theoretic model for rapid charging facilities to predict charging demand when the user arrival rate is not known a priori. The paper~\citep{liu2021electric} models EV charging within a queuing framework to formulate an equilibrium assignment model.  The paper~\citep{flath2012revenue} studies the EV charging revenue problem by adapting capacity control mechanisms from asset revenue management to allocate charging capacity. Lastly, the paper \citep{zhao2013pricing} studies the revenue maximization problem at plug-in hybrid charging stations by studying the equilibria of customer subscription dynamics. Each of these previous works relies on specific modeling assumptions; however, none of the aforementioned papers perform a sensitivity analysis of their respective models to understand how their models perform when the required information is not exactly known. The need for understanding EV charging model sensitivity is the main motivator for the present work.

In the present paper, we focus on the charging facility service level model that is originally presented in the paper \citep{santoyo2020multi}.  Here, users arrive at random times with a collection of random parameters whose respective distributions are assumed to be known. Specifically, the user parameters are the user's energy demand and their value of time. Furthermore, in this model, users are presented with a discrete and finite collection of charging rates and energy prices. Upon arrival, users choose the charging rate and price that minimizes the total cost to themselves which includes their opportunity cost. Each charging rate is associated with a specific service level pricing function that defines the total cost to the user of choosing a particular charging rate as a function of the user's demand parameters.

Since users present a collection of random demands and arrivals to the charging facility, we study the expected (mean) occupancy. For the pricing schemes derived in the papers \citep{santoyo2020multi, santoyo2020pricing}, explicit formulas for the expected value are derived under the assumption that the distributions of user arrivals and their parameters are known. Specifically, these formulas require knowing the distribution of the user's value of time, assuming all users have the same usage profile, and that users will remain at the charging facility until charge completion. 

In practice, given the wide availability of data on EVs and consumer habits, it is reasonable to assume that a charging facility can obtain good estimates of quantities that can be explicitly measured such as user arrival times and user's energy demands. While recent studies attempt to quantify behavioral factors \citep{asensio2021field}, in practice, obtaining an accurate characterization of a user's impatience remains challenging. Furthermore, the existence of heterogeneous charging populations, e.g., a subset of users who can only use one of the offered charging levels, and populations who depart early affects the estimates of expected occupancy.

We are motivated to study the sensitivity of the expected occupancy to such mischaracterizations in the user populations. First, we derive a worst-case error bound for the expected occupancy. Second, we derive the gradient of the expected occupancy that describes how the expected occupancy changes with a given characterization of the user's value of time.  Here, the user's impatience is considered to be a discrete random variable.

The present work extends the paper \citep{santoyo2021sensitivity} by considering the sensitivity of the expected occupancy to heterogeneous user populations. In particular, the present work derives a gradient expression for when user populations do not have the same charging capabilities, e.g., some user's vehicles may only use a specific service level, or some older EVs may not have the technology required for high-speed charging. This extension addresses the practical cases where certain vehicles are restricted to specific charger types \citep{afdc_energygov}, hence affecting the expected occupancy at a charging facility. Lastly, we consider the occurrence of early departures at a charging facility. The service level model in the paper \citep{santoyo2020pricing} works under the assumptions that users remain at the charging facility until they receive a full charge.

This paper is organized as follows: Section \ref{sect: problem_formulation} presents the model formulation for the pricing model. Section \ref{sect: main_results} presents the sensitivity analysis results for mischaracterized user profiles and uncharacterized heterogeneity in the user population. Section \ref{sect: case_study} details a numerical study and Section \ref{sect: conclusion} presents the conclusions.

\subsection{Notation}
For an indexed set of variables $\{x^k\}$, we let $\Delta^i_j x$ denote the difference between the variable with index $i$ and $j$, i.e., $\vdiff{i}{j}{x} = x^i - x^j$.
When considering a collection of independent and identically distributed (i.i.d) random variables indexed by subscripts, we use non-subscript variables when referring to properties that hold for any of the i.i.d random variables. For example, $\mathbb{E}[x]$ is the expectation of each i.i.d random variable $x_j$, $j$ belonging to some index set. For some set $\mathcal{A}$, define $\mathcal{P}(\mathcal{A})$ to be the power set of $\mathcal{A}$ minus the empty set $\varnothing$, i.e, the set of all subsets of $\mathcal{A}$ excluding the empty set.

\section{Problem Formulation}
\label{sect: problem_formulation}
In this section, we present a  pricing model for EV charging that was initially introduced in the papers \citep{santoyo2020multi, santoyo2020pricing} and captures practical uses of charging level equipment \citep{afdc_energygov}. We consider a \textit{defined service level model} where users directly choose from a discrete set of charging rates and prices upon arrival at the charging facility; a user pays a higher price for a faster charge rate. A rational user chooses a cost-minimizing charge rate depending on the amount of charge required for their EV, the prices and rates set by the charging facility, and their impatience factor, i.e., their value of time.

At this facility, a user $j$ arrives at some time $\tau_j$ (in hr.) with charging demand $x_j$ (in kWh), and an impatience factor~$\alpha_j$ (in \$/hr.). Throughout the paper, we make the following assumption about the aforementioned variables.

\begin{assump}[Users]
\label{assump: rv_arrival}
User arrivals at the charging facility is a Poisson process with parameter $\lambda$ (in EVs/hr.). Individual charging demand $x_j$ and the impatience factor $\alpha_j$ for each user $j$ are random variables which are independent and identically distributed (i.i.d). In particular, $x_j$ is a continuous random variable with support $[x_\text{min}, x_\text{max}]$ for some $0 < x_\text{min}< x_\text{max}$. Furthermore, $\alpha_j$ is a discrete random variable with $M$ possible values whose probability mass function $p_A(\alpha; p, a)$ has a probability mass vector $p = [p_1, \dots, p_M]^\top$ corresponding to 
the impatience value vector $a = [a_1, \dots, a_M]^\top$ such that $\mathbb{P}(\alpha_j=a_i)=p_A(a_i; p, a) = p_i$ for each $i$.  
\end{assump}

When using the probability operator $\P(\,\cdot\,)$ for $\alpha_j$ it is understood that this probability is computed with some probability mass vector $p$ and impatience category vector $a$. By assuming $\alpha_j$ are i.i.d discrete random variables, we assume the population of users is divided into a finite number of impatience categories; for example, each user may be patient with a low value of $\alpha$ or impatient with a high value of $\alpha$. 

\begin{table}
        \centering
        \caption{User Parameter Definitions}
        \label{table:userparam}
        \begin{tabular}{cllc} 
        \textbf{Var.} & \textbf{Parameter} & \textbf{Unit} & \multicolumn{1}{l}{\textbf{Range}} \\ \hline 
        $j$ & user index & - & - \\
        $\tau_j$ & arrival time & hr. & - \\
        $x_j$ & user demand & kWh & $[x_\text{min}, x_\text{max}]$\\ 
        $\alpha_j$ & impatience factor & \$/hr. & $\{a_1, \dots, a_M\}$\\
        $r_j$ & charging rate & kW & $(0, R^\text{max}]$\\
        [0.1em] \hline
        \end{tabular}
        \label{table: facilityparam}
\end{table}

The user parameters, their respective units, and upper and lower bounds are summarized in Table~\ref{table:userparam}.
The charging facility offers $L$ service levels such that $\mathcal{L} = \{1, \dots, L\}$ is the set of offered service levels. Each service level $\ell \in \mathcal{L}$ corresponds to a distinct charging rate $R^\ell>0$ (in kW) and price $V^\ell > 0$ (in \$/kWh) that is the cost per unit energy for the service level. Thus, user $j$ with energy demand $x_j$ pays $x_jV^\ell $ (in \$) to receive a full charge over the time horizon $x_j/R^\ell$ (in hr.) when choosing service level $\ell$.

The parameters related to the charging facility under a discrete pricing model are listed in Table~\ref{table:facilityparam}. To distinguish the parameters related to the charging facility from those related to the users, the charging facility parameters are upper case and indexed by a superscript, while the parameters for the users are lower case and indexed by a subscript $j$.

\begin{table}
    \centering
    \caption{Parameter Definitions for the Charging Facility}
     \label{table:facilityparam}
    \begin{tabular}{cllc}
    \textbf{Var.} & \textbf{Parameter} & \textbf{Unit} & \multicolumn{1}{l}{\textbf{Range}} \\ \hline 
    $\ell$ & service level & - & $\{1,\ldots, L\}$ \\
    $V^\ell$ & price per unit of energy & \$/kWh & -\\
    $R^\ell$ & charging rate & kW & $(0, R^\text{max}]$\\
    [0.1em]  \hline
    \end{tabular}
\end{table}

\begin{assump}[Model Charging Rates]
\label{assump: ordering_of_func}
Among $L$ service levels offered by the charging facility, a higher charging rate is more costly, i.e., if $R^i > R^k$ then $V^i > V^k$. Moreover, charging rates and prices are distinct so that $R^i \neq R^k$ for all $i \neq k$. Lastly, and without loss of generality, the charging facility's pricing functions are enumerated such that $V^1<V^2<\ldots <V^L$ and therefore $R^1<R^2<\ldots<R^L$. 
\end{assump}

A user can therefore pay less by choosing a slower charge rate but must balance this with their impatience. In particular, the  total cost faced by a user arriving at the charging facility with impatience factor $\alpha_j$, charging demand $x_j$, and who chooses service level $\ell$, is 
\begin{align}
    \label{eq: cost_func_special_case}
    g_\ell(x_j, \alpha_j) =  x_jV^\ell  + \alpha_j\frac{x_j}{R^\ell} \, .
\end{align}

Individual users choose a service level at a charging facility which minimizes their total cost of charging factoring in their impatience. To that end, let $S(x_j, \alpha_j): [x_\text{min}, x_\text{max}] \times \{a_1, \dots, a_M \} \to \{1, \ldots, L\}$ be defined by 
\begin{align}
    \label{eq: level_func}
    S(x_j, \alpha_j) =  \argmin\limits_{\ell \in \{1, \dots, L \}} g_\ell(x_j, \alpha_j)\,.
\end{align} 
Then, a rational user $j$ chooses service level $S(x_j, \alpha_j)$ in order to minimize their total cost as formalized in the later stated assumption. 

\begin{assump}[Impatience Ambiguity]
\label{assump: ratio_cannot_equal_a_i}
A charging facility offers $L$ service levels with price per unit energy $V^\ell$ and charging rate $R^\ell$ according to Assumption \ref{assump: ordering_of_func} such that $\Delta^i_k V/\Delta^k_i \bar{R} \neq a_m$ for all $k,i$ for any $m$. 
\end{assump}

Assumption \ref{assump: ratio_cannot_equal_a_i} holds generically and avoids the scenario where two charging levels are equally attractive to a user from the perspective of \eqref{eq: cost_func_special_case}, i.e., the minimization in \eqref{eq: level_func} does not have a unique minimizer.

For notational convenience, we also define the values $r_j$ to be the charging rate and cost per unit of energy chosen by user $j$ after solving~\eqref{eq: level_func}, i.e., $r_j = R^{S(x_j, \alpha_j)}$, as indicated in Table~\ref{table:userparam}. Observe that the user charging times $x_j/r_j$, being uniquely determined by $x_j$ and $\alpha_j$,
constitute a collection of independent and identically distributed random variables. Furthermore, this means the time a user spends at the charging location is $x_j/r_j$ where this is the time for a user to receive a full charge based on their chosen service level.

\begin{assump}[Users are Rational]
\label{assump: timeatcharger}
Each user chooses a charging rate according to~\eqref{eq: level_func} and leaves the charging facility once they have satisfied their charging demand.
Thus, user $j$ occupies a charger at the facility during the time interval $[\tau_j, \tau_j+ x_j/r_j]$ to receive full charge $x_j$.  
\end{assump}  

\begin{remark}
Note that the usage of Assumption \ref{assump: timeatcharger} is explicit particularly in the case where it is assumed that users remain at the charging facility to receive the requested charge $x_j$. In the present work, we do consider the practical scenario of early departures which is detailed at the end of Section \ref{sect: main_results}. We introduce the possibility of early departures distinctly from the problem formulation to distinguish the extension from the previously developed service level model. 
\end{remark}
Let the occupancy at the charging facility be defined as $\eta$. Note that $\eta$ is also the number of actively charging users.

\section{Main Results}
\label{sect: main_results}
At a given charging facility with pricing functions of the form of \eqref{eq: cost_func_special_case}, users arrive at random times with random parameters to ultimately make a service level choice that minimizes the cost to themselves by solving \eqref{eq: level_func}. Knowing the probability distribution of user arrivals and their respective parameters enables EV charging facility operators to analyze the system-wide behavior at their facility via the expected occupancy.

To obtain an expression for the occupancy, recall Lemma \ref{lemma: probability_choose_level_special_case} below from the paper \citep{santoyo2020pricing}, which details how a user $j$ arriving with charging demand $x_j$ and impatience factor $\alpha_j$ chooses a specified service level while facing pricing functions of the form of \eqref{eq: cost_func_special_case}. We consider the case where  consequently, Lemma \ref{lemma: probability_choose_level_special_case} also provides an analytical expression for the probability mass function (PMF) for user's choice of charging rate.  

\begin{lemma}[Corollary 1 of \citep{santoyo2020pricing}]
    \label{lemma: probability_choose_level_special_case}
     Under assumptions \ref{assump: rv_arrival}, \ref{assump: ordering_of_func}, \ref{assump: ratio_cannot_equal_a_i}, and  \ref{assump: timeatcharger}, consider the set of $L$ functions of two independent RVs $\big\{g_\ell(x_j,\alpha_j) \big\}^L_{\ell = 1}$ where each $g_\ell$ is as defined in \eqref{eq: cost_func_special_case}. Then,  for $k \in \{1, \ldots, L \}$,
    \begin{equation*}
         \P\bigl(S(x_j, \alpha_j) = k\bigr) = \P \left(\ubar{\alpha}^k < \alpha_j< \Bar{\alpha}^k \right)
    \end{equation*}
    where $\ubar{\alpha}^1 = -\infty$ and $\bar{\alpha}^L = +\infty$ otherwise
    \begin{align}
        \label{eq:alpha_1}
        \Bar{\alpha}^k &=  \ \min_{k < i}\frac{\vdiff{i}{k}{V}}{\vdiff{k}{i}{\bar R}} \,,\\
        \label{eq:alpha_2}\ubar{\alpha}^k &=  \max_{i<k} \frac{\vdiff{i}{k}{V}}{\vdiff{k}{i}{\bar R}}\,.
    \end{align}
    Furthermore, the charging rate $r_j$ chosen by each user $j$ is a 
    discrete random variable each with PMF 
    \begin{equation}
        \label{eq: pmf_special_case}
        p_{r}(r; p, a) = 
        \begin{cases}       
            \P \left(\ubar{\alpha}^1 < \alpha_j < \Bar{\alpha}^1 \right) & \text{if } r = R^1 \,, \\ 
            \qquad \quad  \vdots &  \\
            \P \left(\ubar{\alpha}^L < \alpha_j < \Bar{\alpha}^L \right) & \text{if } r = R^L \,.
        \end{cases}
    \end{equation}
\end{lemma}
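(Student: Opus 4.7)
The plan is to reduce the service-level choice to a one-dimensional comparison in $\alpha_j$ by exploiting the multiplicative structure of $g_\ell$ in~\eqref{eq: cost_func_special_case}. Factoring out $x_j>0$ gives $g_\ell(x_j,\alpha_j) = x_j\bigl(V^\ell + \alpha_j/R^\ell\bigr)$, so $S(x_j,\alpha_j)$ does not depend on $x_j$ and coincides with $\argmin_\ell \bigl(V^\ell + \alpha_j/R^\ell\bigr)$. Writing $\bar R^\ell = 1/R^\ell$, level $k$ beats level $i$ at impatience $\alpha_j$ iff
\begin{equation*}
    V^k + \alpha_j\bar R^k \;<\; V^i + \alpha_j\bar R^i
    \quad\Longleftrightarrow\quad
    \alpha_j\,\vdiff{k}{i}{\bar R} \;<\; \vdiff{i}{k}{V}.
\end{equation*}

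Next, I would split this pairwise condition into two cases according to the sign of $\vdiff{k}{i}{\bar R}$. By Assumption~\ref{assump: ordering_of_func}, $R^1<\cdots<R^L$, so $\bar R^1>\cdots>\bar R^L$; hence $\vdiff{k}{i}{\bar R}>0$ when $i>k$ and $\vdiff{k}{i}{\bar R}<0$ when $i<k$. Dividing through (and flipping the inequality in the negative case) yields
\begin{equation*}
\alpha_j < \frac{\vdiff{i}{k}{V}}{\vdiff{k}{i}{\bar R}} \text{ for all } i>k,
\qquad
\alpha_j > \frac{\vdiff{i}{k}{V}}{\vdiff{k}{i}{\bar R}} \text{ for all } i<k.
\end{equation*}
Intersecting these conditions over all $i\neq k$ gives precisely $\ubar\alpha^k < \alpha_j < \bar\alpha^k$ with the extrema in~\eqref{eq:alpha_1}--\eqref{eq:alpha_2}, and the boundary conventions $\ubar\alpha^1=-\infty$, $\bar\alpha^L=+\infty$ arise because the corresponding $\max$/$\min$ is taken over an empty index set.

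To conclude, I would invoke Assumption~\ref{assump: ratio_cannot_equal_a_i} to rule out the measure-zero event that $\alpha_j$ attains any of the threshold ratios, so that the $\argmin$ in~\eqref{eq: level_func} is almost surely unique and the strict inequalities above are equivalent to $S(x_j,\alpha_j)=k$ on a probability-one event. Taking probabilities on both sides yields $\P(S(x_j,\alpha_j)=k)=\P(\ubar\alpha^k<\alpha_j<\bar\alpha^k)$, and assembling these probabilities over $k=1,\dots,L$ together with the fact that $r_j=R^{S(x_j,\alpha_j)}$ gives the PMF~\eqref{eq: pmf_special_case}. The main bookkeeping obstacle is keeping the sign of $\vdiff{k}{i}{\bar R}$ straight across cases $i\gtrless k$; beyond that, the result is an immediate consequence of the linearity of $g_\ell$ in $\alpha_j$ after the $x_j$ factor is removed.
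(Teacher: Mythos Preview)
Your proof sketch is correct and follows the natural line of argument for this result. Note, however, that the paper does not actually provide its own proof of this lemma: it is stated as a recalled result (Corollary~1 of \citep{santoyo2020pricing}) and used without proof, so there is no in-paper argument to compare against. Your approach---factoring out $x_j>0$ to reduce to an affine family in $\alpha_j$, then reading off the pairwise breakpoints and intersecting them---is exactly the standard derivation one would expect for a lower-envelope-of-lines result of this type, and it handles the boundary conventions and the role of Assumption~\ref{assump: ratio_cannot_equal_a_i} cleanly.
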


Lemma \ref{lemma: probability_choose_level_special_case} demonstrates that the probability of choosing a particular price per unit of energy $V^k$ and charging rate $R^k$ 
solely depends on the likelihood that a user~$j$'s impatience factor $\alpha_j \sim p_A(\alpha;p,a)$ falls within the interval $(\ubar{\alpha}^k , \Bar{\alpha}^k)$. 

In the remainder of this section, we derive a worst-case error bound on the expected occupancy at charging facility. Furthermore, we focus on providing insight on the deviation presented in Proposition \ref{proposition:maxdiff_expectation}, given some knowledge on how the distribution of impatience factors deviates. First, we focus on deviations in $p$, the vector of probabilities that a user will have a particular impatience factor. Then, we characterize deviations in $a$, the vector of possible impatience factors for the population of users. In addition to mischaracterized user profiles, we consider the possibility of uncharacterized heterogeneity in the arriving users such that there are restrictions on which of the offered service levels can be chosen. Lastly, we discuss the effects of early departures on the expected occupancy at the charging facility. 

\subsection{Worst-Case Occupancy Error Bound}
\label{subsect: worst_case_err_bound}
A charging facility with pricing functions of the form~\eqref{eq: cost_func_special_case} experiences user arrivals with i.i.d impatience factors that are distributed with $p_A(\alpha; p,a )$. Any expected value that is dependent on $p_A(\alpha; p,a )$ is written as $\E[\,\cdot\, ;p,a]$.
Specifically, we write the expected occupancy at a charging facility as 
$\E[\eta; p, a] = \lambda \E[x/r;p, a] = \lambda \E[x] \E[\frac{1}{r}; p, a]$. We break up $\E[\eta; p, a]$ like this because, as is seen in Lemma \ref{lemma: probability_choose_level_special_case}, the probability of choosing a particular charge rate is independent of the user demand $x_j$. 

Since, in this paper, we are interested in studying how the expected occupancy changes when users' impatience changes, we start by stating a general upper bound on the deviation of the expected occupancy.

\begin{proposition}
\label{proposition:maxdiff_expectation}
Consider a charging facility operating under Assumptions~\ref{assump: rv_arrival}, \ref{assump: ordering_of_func}, \ref{assump: ratio_cannot_equal_a_i}, and~\ref{assump: timeatcharger}. 
Define $\eta$ to be the occupancy possibly with two different probability mass functions for the impatience factor, $p_A(\alpha; p, a)$ and $p_{A}(\alpha, \tilde{p}, \tilde{a})$, then  
\begin{align*}
    \abs*{\E[\eta;p,a] - \E[\eta;\tilde{p},\tilde{a}]} \leq \lambda \E[x]\left(\frac{1}{R^1} - \frac{1}{R^L}\right).
\end{align*}
Moreover, there exists probability mass functions such that the bound is tight.
\end{proposition}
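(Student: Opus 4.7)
The plan is to reduce the occupancy difference to a difference of expectations of the random variable $1/r$, and then use the fact that $r$ takes values in a bounded discrete set $\{R^1, \dots, R^L\}$.

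First I would exploit the factorization stated just before the proposition, namely $\E[\eta;p,a] = \lambda\,\E[x]\,\E[1/r;p,a]$, which holds because $x_j$ and $\alpha_j$ are independent (Assumption~\ref{assump: rv_arrival}) and $r_j$ depends only on $(x_j,\alpha_j)$ through the minimizer $S$, whose distribution (by Lemma~\ref{lemma: probability_choose_level_special_case}) depends only on $\alpha_j$. This reduces the claim to showing
\begin{equation*}
\bigl|\E[1/r;p,a]-\E[1/r;\tilde p,\tilde a]\bigr|\leq \frac{1}{R^1}-\frac{1}{R^L}.
\end{equation*}

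Next I would observe that, regardless of the impatience distribution, $r$ is supported in $\{R^1,\dots,R^L\}$, so $1/r\in[1/R^L,\,1/R^1]$ almost surely by Assumption~\ref{assump: ordering_of_func}. Hence $\E[1/r;p,a]$ and $\E[1/r;\tilde p,\tilde a]$ both lie in the interval $[1/R^L,\,1/R^1]$, and the difference of two numbers in this interval is at most $1/R^1-1/R^L$. Multiplying by $\lambda\E[x]$ yields the stated bound.

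For tightness, I would exhibit explicit impatience distributions that saturate both ends. Using the threshold structure from Lemma~\ref{lemma: probability_choose_level_special_case}, a sufficiently small impatience value $a<\underline{\alpha}^1$'s effective upper threshold, i.e.\ any $a$ lying in the interval $(\underline{\alpha}^1,\bar\alpha^1)$ that selects service level~$1$, makes every user choose $R^1$, so $\E[1/r;\delta_a,\cdot]=1/R^1$; symmetrically, picking $\tilde a$ in the interval selecting service level~$L$ forces every user to choose $R^L$ and gives $\E[1/r;\delta_{\tilde a},\cdot]=1/R^L$. Under Assumption~\ref{assump: ratio_cannot_equal_a_i}, these intervals are non-degenerate and non-empty (the first is $(-\infty,\bar\alpha^1)$ and the last is $(\underline{\alpha}^L,+\infty)$), so such $a,\tilde a$ always exist, and the corresponding Dirac probability vectors $p=\tilde p=1$ produce $|\E[\eta;p,a]-\E[\eta;\tilde p,\tilde a]|=\lambda\E[x](1/R^1-1/R^L)$.

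I do not anticipate a serious obstacle: the argument is essentially a one-line range bound on a random variable, and the main subtlety is only to confirm the factorization $\E[\eta]=\lambda\E[x]\E[1/r]$ and to argue that the extreme-selection impatience values required for tightness are feasible under the standing assumptions.
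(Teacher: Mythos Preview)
Your proposal is correct and follows essentially the same approach as the paper: both use the factorization $\E[\eta;p,a]=\lambda\E[x]\E[1/r;p,a]$, bound the difference of $\E[1/r]$ by the range $1/R^1-1/R^L$, and exhibit tightness via impatience distributions that force all users onto $R^1$ or $R^L$, respectively. Your write-up is in fact slightly more explicit than the paper's (which simply asserts that the extremal choice ``yields the maximum difference''), but the underlying idea is identical.
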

\begin{proof}
As observed earlier, the choice of charging rate is independent of the user's demand, hence
\begin{align*}
    \E[\eta;p,a] = \lambda \E[x] \E[\frac{1}{r}; p,a] \, .
\end{align*}
Let $p_r(r;p, a)$ and $p_{r}(r; \tilde{p}, \tilde{a})$ denote the corresponding distributions for the charging rates, according to Lemma~\ref{lemma: probability_choose_level_special_case}. We then obtain
\begin{equation}
     \abs*{\E[\eta;p,a] - \E[\eta; \tilde{p}, \tilde a]} =  
     \lambda \E[x] \abs*{\sum^L_{\ell = 1} \frac{p_r(R^\ell; p,a )}{R^\ell}
    - \sum^L_{\ell = 1} \frac{p_r(R^\ell;\tilde{p}, \tilde{a})}{R^\ell}}\,. \label{eq:prop1proof1}
\end{equation}
It can be seen that it is possible to choose distributions of $\alpha$ such that $p_r(R^1; p,a) = 1$ or $p_r(R^L; p,a) = 1$, and  $p_r(R^1; \tilde p,\tilde a) = 1 - p_r(R^1; p,a)$, $p_r(R^L; \tilde p, \tilde a) = 1 - p_r(R^L; p,a)$, which yields the maximum difference in~\eqref{eq:prop1proof1}. 
\end{proof}

\medskip 

Proposition~\ref{proposition:maxdiff_expectation} demonstrates that the difference between the expected occupancy computed with the true and the mischaracterized PMF of $\alpha_j$ is upper bounded. More specifically, assuming a correct characterization of $\lambda$ and $\E[x]$, this worst-case upper bound is driven by the difference of the inverse of the fastest and slowest charging rate.

\subsection{Mischaracterized User Profiles}
A charging facility knows that the expected occupancy varies with $p$ and $a$ of the user impatience. Hence, in this section, we derive results that enable a charging facility to quantify the variability of the expected occupancy with respect to $p$ or $a$.

\label{subsect: mischaracterized_user_profille}
\begin{theorem}
\label{theorem: E_derivative_P}
Consider a charging facility operating 
under Assumptions \ref{assump: rv_arrival}, \ref{assump: ordering_of_func}, \ref{assump: ratio_cannot_equal_a_i}, and \ref{assump: timeatcharger} with $L$ pricing functions of the form of \eqref{eq: cost_func_special_case}.  Then we have that, 
\begin{align}
    \label{eq: E_derivative_P}
    \nabla_p \E[\eta; p,a] = \lambda \E[x] \begin{bmatrix}
    \sum_{\ell = 1}^{L} \frac{\mathds{1}_{\ell}(a_1)}{R^\ell} \\ 
    \vdots \\ 
    \sum_{\ell = 1}^{L} \frac{\mathds{1}_{\ell}(a_M)}{R^\ell}
    \end{bmatrix},
\end{align}
where $\mathds{1}_{\ell}(a_i) = 1$ if $\ubar{\alpha}^\ell < a_i < \bar{\alpha}^\ell$ and $\mathds{1}_{\ell}(a_i) = 0$ otherwise, and we recall that $\E[\eta; p,a]$ is the expected occupancy where user's impatience factors are distributed with $p_A(\alpha; p, a)$.
\end{theorem}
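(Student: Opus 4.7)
The plan is to express $\E[\eta; p, a]$ explicitly as a linear function of the entries of $p$, after which the gradient formula falls out by direct differentiation. First I would recall from the opening of Section~\ref{subsect: worst_case_err_bound} the factorization
\begin{equation*}
    \E[\eta; p, a] = \lambda \E[x] \, \E\!\left[\tfrac{1}{r}; p, a\right],
\end{equation*}
which holds because the probability of choosing a particular service level is independent of the user demand $x_j$ (Lemma~\ref{lemma: probability_choose_level_special_case}). Since $r$ takes values in the finite set $\{R^1, \ldots, R^L\}$ with PMF $p_r(\,\cdot\,; p, a)$ given by~\eqref{eq: pmf_special_case}, we obtain
\begin{equation*}
    \E\!\left[\tfrac{1}{r}; p, a\right] = \sum_{\ell=1}^{L} \frac{p_r(R^\ell; p, a)}{R^\ell}.
\end{equation*}

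Next, I would unpack $p_r(R^\ell; p, a) = \P(\ubar{\alpha}^\ell < \alpha_j < \bar{\alpha}^\ell)$ in terms of $p$ and $a$. Since $\alpha_j$ is discrete with $\P(\alpha_j = a_i) = p_i$, summing over the atoms of $\alpha_j$ that fall in the interval $(\ubar{\alpha}^\ell, \bar{\alpha}^\ell)$ gives
\begin{equation*}
    p_r(R^\ell; p, a) = \sum_{i=1}^{M} p_i \, \mathds{1}_\ell(a_i),
\end{equation*}
where $\mathds{1}_\ell(a_i)$ is the indicator defined in the theorem statement. Combining the last three displays yields
\begin{equation*}
    \E[\eta; p, a] = \lambda \E[x] \sum_{i=1}^{M} p_i \sum_{\ell=1}^{L} \frac{\mathds{1}_\ell(a_i)}{R^\ell}.
\end{equation*}

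The key observation that makes the rest routine is that the thresholds $\ubar{\alpha}^\ell$ and $\bar{\alpha}^\ell$ defined in~\eqref{eq:alpha_1}--\eqref{eq:alpha_2} depend only on the charging facility parameters $V^\ell$ and $R^\ell$; they do not depend on $p$. Consequently, each indicator $\mathds{1}_\ell(a_i)$ is a constant with respect to $p$, and the expression above is linear in $p$. Differentiating with respect to $p_i$ then gives
\begin{equation*}
    \frac{\partial}{\partial p_i} \E[\eta; p, a] = \lambda \E[x] \sum_{\ell=1}^{L} \frac{\mathds{1}_\ell(a_i)}{R^\ell},
\end{equation*}
which, assembled into a vector over $i = 1, \ldots, M$, is exactly~\eqref{eq: E_derivative_P}.

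I do not anticipate a real obstacle here; the only subtle point worth stating carefully is that Assumption~\ref{assump: ratio_cannot_equal_a_i} ensures no $a_i$ coincides with any threshold $\ubar{\alpha}^\ell$ or $\bar{\alpha}^\ell$, so each $\mathds{1}_\ell(a_i)$ is well-defined (strict inequalities) and locally constant, which is what legitimizes treating it as a $p$-independent constant when differentiating.
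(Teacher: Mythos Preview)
Your proposal is correct and follows essentially the same route as the paper: factor $\E[\eta;p,a]=\lambda\E[x]\E[1/r;p,a]$, expand $\E[1/r;p,a]$ via the PMF from Lemma~\ref{lemma: probability_choose_level_special_case}, rewrite $\P(\ubar{\alpha}^\ell<\alpha_j<\bar{\alpha}^\ell)=\sum_m p_m\mathds{1}_\ell(a_m)$, and differentiate the resulting linear expression in $p$. Your added remarks that the thresholds $\ubar{\alpha}^\ell,\bar{\alpha}^\ell$ are independent of $p$ and that Assumption~\ref{assump: ratio_cannot_equal_a_i} keeps each $a_i$ off the thresholds are correct refinements the paper leaves implicit.
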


\begin{proof}
Given a charging facility operating under Assumptions \ref{assump: rv_arrival}, \ref{assump: ordering_of_func}, \ref{assump: ratio_cannot_equal_a_i}, and \ref{assump: timeatcharger} with $L$ pricing functions of the form of \eqref{eq: cost_func_special_case}. Recalling the PMF of $\alpha_j$, we note that the probability of choosing a specified charge rate $\ell$ has an equivalence where
\begin{align*}
    \P \left(\ubar{\alpha}^\ell < \alpha_j < \Bar{\alpha}^\ell \right) = \sum^{M}_{m = 1} p_m \mathds{1}_{\ell}(a_m) \,.
\end{align*}
 Recall that $\E[\eta; p,a] = \lambda \E[x] \E[\frac{1}{r}; p,a]$. Furthermore, we expand on $\E[\frac{1}{r}; p, a]$ such that
\begin{align*}
    \E\left[\frac{1}{r}; p,a\right] &= \sum_{\ell = 1}^{L} p_r(R^\ell; p, a)\frac{1}{R^\ell} \\
     &= \sum_{\ell = 1}^{L} \P \left(\ubar{\alpha}^\ell < \alpha_j < \Bar{\alpha}^\ell \right) \frac{1}{R^\ell} \\
     &= \sum_{\ell = 1}^{L} \left( \sum^{
    M}_{m = 1} p_m \mathds{1}_{\ell}(a_m)\right)\frac{1}{R^\ell}\,.
\end{align*}
Given the prior substitutions, we compute the gradient of $\E[\eta; p, a]$ with respect to $p$ leading to \eqref{eq: E_derivative_P}. 
\end{proof}

\medskip

From \eqref{eq: E_derivative_P} in Theorem \ref{theorem: E_derivative_P} we see that the gradient with respect to $p$ of $\E[\eta;p,a]$ is constant; hence, $\E[\eta;p,a]$ varies linearly with $p$. A direct corollary of Theorem \ref{theorem: E_derivative_P} is that of the gradient of the expected occupancy with respect to the difference between a true and mischaracterized probability mass where the expected occupancy also varies linearly. 
In addition to Theorem \ref{theorem: E_derivative_P}, a charging facility is interested in how the expected occupancy varies with specific impatience values. This is formalized in the following theorem.

\begin{theorem}
\label{theorem: E_derivative_a}
Consider a charging facility operating 
under Assumptions \ref{assump: rv_arrival}, \ref{assump: ordering_of_func}, \ref{assump: ratio_cannot_equal_a_i} ,and \ref{assump: timeatcharger} with $L$ pricing functions of the form of \eqref{eq: cost_func_special_case}. Recall that $\E[\eta; p,a]$ is the expected occupancy where user's impatience factors are distributed with $p_A(\alpha; p, a)$. Then, 
\begin{itemize}
    \item[1)] 
    For all $p, a$ such that for every $a_i \in (\ubar{\alpha}^k, \bar{\alpha}^k)$ for some $k>0$, it holds that
\begin{align*}
    \nabla_a \E[\eta; p,a] = 0 \,.
\end{align*}
    \item[2)]  For all $a$, for all $i$, and for all $k<L$,
\begin{multline}
    \label{eq: E_derivative_a}
    \lim_{\tilde{a}_i \uparrow \bar{\alpha}^k} \E[\eta;p,\tilde{a}] - \lim_{\tilde{a}_i \downarrow \bar{\alpha}^k} \E[\eta;p,\tilde{a}] = \\ \lim_{\epsilon \rightarrow 0^+} \lambda \E[x] \left(\sum_{\ell = 1}^{L} \frac{p_i}{R^\ell} \left(   \mathds{1}_{\ell}(\bar{\alpha}^k -\epsilon) - \mathds{1}_{\ell}(\bar{\alpha}^k + \epsilon) \right)  \right) \,,
\end{multline}
where 
$\tilde{a}_j = a_j$ for all $j \neq i$, and $\mathds{1}_{\ell}(a_m ) = 1$ if $\ubar{\alpha}^\ell < a_m < \bar{\alpha}^\ell$ and $\mathds{1}_{\ell}(a_m) = 0$ otherwise.
\end{itemize}
\end{theorem}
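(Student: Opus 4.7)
The plan is to leverage the explicit expression for the expected occupancy that was already assembled in the proof of Theorem~\ref{theorem: E_derivative_P}, namely
\begin{align*}
    \E[\eta; p, a] = \lambda \E[x] \sum_{\ell=1}^{L} \frac{1}{R^\ell} \sum_{m=1}^{M} p_m\, \mathds{1}_{\ell}(a_m),
\end{align*}
and exploit the crucial observation that the threshold values $\ubar{\alpha}^\ell$ and $\bar{\alpha}^\ell$ from Lemma~\ref{lemma: probability_choose_level_special_case} depend only on the facility parameters $V^\ell$ and $R^\ell$; they are independent of the impatience vector $a$. Hence the sole dependence of $\E[\eta;p,a]$ on $a$ is through the finite collection of indicators $\mathds{1}_{\ell}(a_m)$.

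For part~(1), I would argue that the hypothesis places every $a_i$ strictly in the interior of some interval $(\ubar{\alpha}^k, \bar{\alpha}^k)$. Since each indicator $\mathds{1}_{\ell}(\,\cdot\,)$ is the characteristic function of an open interval determined by the fixed facility parameters, there is an open neighborhood of each $a_i$ on which every $\mathds{1}_{\ell}(a_i)$ is constant. Consequently $\E[\eta;p,a]$ is locally constant in $a$, so $\nabla_a \E[\eta;p,a] = 0$ on this (open, full-measure) region. This is essentially the statement that the map $a \mapsto \E[\eta;p,a]$ is piecewise constant.

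For part~(2), I would fix $i$ and $k<L$ and consider perturbing only the $i$th coordinate $\tilde{a}_i$ of $a$ across the boundary $\bar{\alpha}^k$. All terms in the double sum with index $m \ne i$ are unaffected, so the two one-sided limits differ only through the $m=i$ term:
\begin{align*}
    \lim_{\tilde{a}_i \uparrow \bar{\alpha}^k} \E[\eta;p,\tilde a] - \lim_{\tilde{a}_i \downarrow \bar{\alpha}^k} \E[\eta;p,\tilde a] = \lambda \E[x] \sum_{\ell=1}^{L} \frac{p_i}{R^\ell}\Bigl( \lim_{\tilde{a}_i \uparrow \bar{\alpha}^k}\mathds{1}_{\ell}(\tilde{a}_i) - \lim_{\tilde{a}_i \downarrow \bar{\alpha}^k}\mathds{1}_{\ell}(\tilde{a}_i) \Bigr).
\end{align*}
Rewriting the one-sided limits of the indicators as $\lim_{\epsilon \to 0^+} \mathds{1}_{\ell}(\bar{\alpha}^k \mp \epsilon)$ yields the claimed formula.

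The only subtlety I expect is verifying that the two one-sided limits on the right of the displayed equation are well-defined as $\tilde{a}_i$ is nudged infinitesimally off $\bar{\alpha}^k$. Here Assumption~\ref{assump: ratio_cannot_equal_a_i} is doing work: it guarantees that $\bar{\alpha}^k$ is a boundary between two distinct intervals of the form $(\ubar{\alpha}^\ell, \bar{\alpha}^\ell)$, so that for sufficiently small $\epsilon>0$ the values $\bar{\alpha}^k-\epsilon$ and $\bar{\alpha}^k+\epsilon$ each lie strictly inside a single such interval (just adjacent ones). This ensures the indicators are constant on each side in a neighborhood of the crossing, and the displayed jump is indeed the entire change in $\E[\eta;p,\tilde a]$. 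No other calculation is required beyond bookkeeping of which indicators flip at the boundary.
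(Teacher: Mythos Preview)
Your proposal is correct and follows essentially the same approach as the paper: both start from the explicit double-sum representation $\E[\eta;p,a]=\lambda\E[x]\sum_{\ell}\sum_{m}p_m\mathds{1}_{\ell}(a_m)/R^\ell$, observe that the dependence on $a$ enters only through the indicators, conclude local constancy for part~(1), and isolate the $m=i$ term to compute the jump for part~(2). Your write-up is in fact more explicit than the paper's in flagging that the thresholds $\ubar{\alpha}^\ell,\bar{\alpha}^\ell$ depend only on facility parameters and in invoking Assumption~\ref{assump: ratio_cannot_equal_a_i} to justify that the one-sided limits are well-defined.
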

\begin{proof}
To prove the first part of Theorem \ref{theorem: E_derivative_a}, recall $\E[\eta; p,a] = \lambda \E[x] \E[\frac{1}{r}; p, a]$. Furthermore, 
\begin{align*}
    \E\left[\frac{1}{r}; p,a\right]
     &= \sum_{\ell = 1}^{L} \left( \sum^{
    M}_{m = 1} p_m \mathds{1}_{\ell}(a_m)\right)\frac{1}{R^\ell}
\end{align*}
where $\mathds{1}_{\ell}(a_m ) = 1$ if $\ubar{\alpha}^\ell < a_m < \bar{\alpha}^\ell$ and $\mathds{1}_{\ell}(a_m) = 0$ otherwise for any $m \in \{1, \dots, M\}$.
Since we have $p, a$ such that $a_i \in \left(\ubar{\alpha}^k, \bar{\alpha}^k\right)$ for some $k$ for all $i$ then the quantity $\E\left[\frac{1}{r}; p,a\right]$ is constant with respect to $a$. This is because $a_i \in (\ubar{\alpha}^k, \bar{\alpha}^k)$ and since the probability mass remains in the interval there is no change to $\E[\eta; p,a]$. Hence, $\nabla_a \E[\eta; p,a] = 0$.

To prove the second part of Theorem \ref{theorem: E_derivative_a}, we analyze the difference $ \E[\eta; p,a^-] - \E[\eta; p, a^+]$ where $\E[\eta; p,a^-]=\lim_{\tilde{a}_i \uparrow \bar{\alpha}^k} \E[\eta;p,\tilde{a}]$ and  $\E[\eta; p,a^+]=\lim_{\tilde{a}_i \downarrow \bar{\alpha}^k} \E[\eta;p,\tilde{a}]$ and  $\tilde{a}_j  = a_j$ for some $j$. From before, realize that $\E[\eta; p,a^-] = \lambda \E[x] \E[\frac{1}{r};p, a^-]$ and $\E[\eta; p,a^+] = \lambda \E[x] \E[\frac{1}{r};p, a^+]$. Substituting the summation form of $\E[\frac{1}{r};p, a^-]$ and $\E[\frac{1}{r};p, a^+]$ leads to \eqref{eq: E_derivative_a}.
\end{proof}


\medskip

Theorem~\ref{theorem: E_derivative_a} first states that $\E[\eta; p,a] = E[\eta;p,\tilde{a}]$ for all $\tilde{a}$ such that $a_i \neq \tilde{a}_i$ and $a_i, \tilde{a}_i \in (\bar{\alpha}^k, \ubar{\alpha}^k)$ for some $k$, i.e., the expected occupancy does not vary when an impatience value $a_i$ is changed  within a given $(\bar{\alpha}^k, \ubar{\alpha}^k)$ interval. The second statement of the theorem states what happens if the change of $a_i$ crosses the boundary of an interval $(\bar{\alpha}^k, \ubar{\alpha}^k)$ , i.e., a change in the expected occupancy. Practically, the second statement of Theorem~\ref{theorem: E_derivative_a} states that if an impatience exits an $(\bar{\alpha}^k, \ubar{\alpha}^k)$ region there will be a jump or drop in the expected occupancy value. Hence, to find the expected occupancy value difference from this jump or drop in value one must take the difference just above and below $\ubar{\alpha}^k$. 

\subsection{Heterogeneity in the User Population}
\label{subsect: heterogeneity_user_pop}
In practice, it could be the case that the EV populations arriving at a charging facility have heterogeneous charging capabilities such that not all EVs arriving at a charging facility are free to choose any of the $L$ offered service levels \citep{afdc_energygov}. For example, consider the scenario where $L=3$ service levels are offered at a charging facility but only half of the population is able to choose service levels $\ell = 1$ or $2$ and the other half can only choose $\ell = 2$ or $3$. Furthermore, it is also possible and plausible that the different sub-populations of the EV charging facility have disparate impatience profiles. 

As a result of these phenomena, a charging facility is forced to estimate the probability an arriving user at an EV charging facility is a member of a sub-population that may only use a subset of the available chargers. In practice, a charging facility is at risk of incorrectly estimating the sub-population membership probabilities. Hence, we are motivated to study how the expected occupancy varies with varying sub-population estimate values. In the case where there exist heterogeneous user sub-populations that may only choose from a subset of the offered service levels we use the following assumption.

\begin{assump}
\label{assump: usersubpopulations}
 Users arriving at a charging facility are constrained to choose only from a subset of the offered service levels at the charging facility. Recall that $\mathcal{P}\left( \mathcal{L}\right)$ is the power set of the offered service levels minus the empty set and $C = 2^L - 1$ is the cardinality of this set. Furthermore, define $B_i \in \mathcal{P}\left( \mathcal{L}\right)$ to be user service-level subset $i$ where $i\in \{1, \dots, C\}$. Let $p_{B} = [p_{B_1}, \dots, p_{B_C}]^T$ such that $p_{B_i}$ is the sub-population probability and $\sum_{i=1}^{C} p_{B_i} = 1$.
 
\end{assump}

\begin{remark}
 We consider the possibility that a user sub-population may have a different impatience profile define $p^{B_i} = [p_{1}^{B_i}, \dots p_{M}^{B_i}]$ and  $a^{B_i} = [a_{1}^{B_i}, \dots a_{M}^{B_i}]$ to be the probability mass vector and impatience profile vector of subset $B_i$, respectively. Any quantity which is being computed with respect to a subset will be defined conditioned on some $B_i$, e.g., given $B_1 = \{1\}$ then $\E[\eta; p^{B_1}, a^{B_1}\mid B_1]$ is the expected occupancy conditioned on the user sub-population only being able to choose the slowest charging rate. If there is no conditioning on a sub-population, a quantity is for the case when all service levels are possible choices.
\end{remark}

Assumption \ref{assump: usersubpopulations} formally defines the occurrence of population subsets related to the possible charging choices for EV charging facility users.  Next, we  present Proposition \ref{proposition: grad_expect_pB_subpop} which proves the variability of the expected occupancy with respect to different estimate of the user sub-population probability. 

\begin{proposition}
\label{proposition: grad_expect_pB_subpop}
Consider a charging facility that offers $L$ service levels and is operating under Assumptions \ref{assump: rv_arrival}, \ref{assump: ordering_of_func}, \ref{assump: ratio_cannot_equal_a_i}, \ref{assump: timeatcharger}, and \ref{assump: usersubpopulations}. Let $p_{B} = [p_{B_1}, \dots, p_{B_C}]^T$ where $p_{B_i}$ is the sub-population probability of $B_i \in \mathcal{P}\left(\mathcal{L}\right)$ and $\sum_{i=1}^{C} p_{B_i} = 1$.
Then,
\begin{align*}
    \nabla_{p_B} \mathbb{E}[\eta] = \begin{bmatrix}
    \sum^L_{\ell = 1} \sum^{M}_{m=1}p_{m}^{B_1} \frac{\mathds{1}_{\ell }(a_{m}^{B_1} )}{R^\ell} \\ 
    \vdots \\
    \sum^L_{\ell = 1} \sum^{M}_{m=1}p_{m}^{B_C} \frac{\mathds{1}_{\ell}(a_{m}^{B_C})}{R^\ell}.
    \end{bmatrix}
\end{align*}
where $\mathds{1}_{\ell }(a_{m}^{B_i}) = 1$ if $\ubar{\alpha}^\ell < a_{m}^{B_i} < \bar{\alpha}^\ell$ and $\mathds{1}_{\ell}(a_{m}^{B_i}) = 0$ otherwise.

\end{proposition}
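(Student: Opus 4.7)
The plan is to extend the argument used in the proof of Theorem~\ref{theorem: E_derivative_P} by conditioning on sub-population membership through the law of total expectation. Since every arriving user belongs to exactly one subset $B_i \in \mathcal{P}(\mathcal{L})$ with probability $p_{B_i}$, and the sub-population membership is drawn independently of the demand $x_j$, I would first write
\begin{align*}
    \E[\eta] = \sum_{i=1}^{C} p_{B_i}\, \E\bigl[\eta; p^{B_i}, a^{B_i} \mid B_i\bigr],
\end{align*}
so that only the coefficient $p_{B_i}$, and not the conditional occupancy itself, depends on the sub-population probabilities being differentiated.

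Second, for each fixed $B_i$, users restrict their choice to the service levels in $B_i$, so Lemma~\ref{lemma: probability_choose_level_special_case} still applies with the interval endpoints $\ubar{\alpha}^\ell, \bar{\alpha}^\ell$ computed relative to the restricted set (and with Assumption~\ref{assump: ratio_cannot_equal_a_i} invoked within that restricted set to preserve a unique minimizer). Repeating the computation from the proof of Theorem~\ref{theorem: E_derivative_P}, I would expand $\E[\eta; p^{B_i}, a^{B_i} \mid B_i] = \lambda \E[x]\, \E\bigl[\tfrac{1}{r}; p^{B_i}, a^{B_i} \mid B_i\bigr]$ and use the identity
\begin{align*}
    \P\bigl(\ubar{\alpha}^\ell < \alpha_j < \bar{\alpha}^\ell \mid B_i\bigr) = \sum_{m=1}^{M} p_m^{B_i}\, \mathds{1}_\ell\bigl(a_m^{B_i}\bigr)
\end{align*}
to obtain a closed-form expression for each conditional expected occupancy that is linear in $p^{B_i}$ and independent of the other $p_{B_j}$'s.

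Finally, because each summand is the product of $p_{B_i}$ with a quantity that does not depend on any $p_{B_j}$, the partial derivative $\partial \E[\eta]/\partial p_{B_i}$ simply returns that conditional expected occupancy, which yields the $i$-th entry of the claimed gradient. No major obstacle is anticipated; the only subtlety is bookkeeping the indicator functions correctly for each restricted set $B_i$ (since the thresholds $\ubar{\alpha}^\ell, \bar{\alpha}^\ell$ are sub-population dependent), and confirming that, as in Theorem~\ref{theorem: E_derivative_P}, the gradient is taken treating the entries of $p_B$ as free variables rather than enforcing the simplex constraint $\sum_i p_{B_i} = 1$.
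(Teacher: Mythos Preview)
Your proposal is correct and follows essentially the same route as the paper: decompose $\E[\eta]$ via the law of total expectation as $\sum_{i=1}^{C} p_{B_i}\,\E[\eta \mid B_i]$, express each conditional expectation in the double-sum form inherited from Theorem~\ref{theorem: E_derivative_P}, and differentiate entrywise. If anything, you are more careful than the paper in flagging that the thresholds $\ubar{\alpha}^\ell,\bar{\alpha}^\ell$ must be recomputed relative to each restricted set $B_i$, and in retaining the $\lambda\,\E[x]$ prefactor that the paper's stated gradient omits.
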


\begin{proof}
 From the law of total probability, we have that
\begin{align*}
    \mathbb{E}[\eta] = \sum^C_{i=1} \mathbb{E}[\eta \mid B_i]p_{B_i}.
\end{align*}
Realizing that $\mathbb{E}[\eta \mid B_i] = \sum^L_{\ell = 1} \sum^{M}_{m}p_{m}^{B_i} \frac{\mathds{1}_{\ell }(a_{m}^{B_i} )}{R^\ell}$ we can further simplify to
\begin{align*}
    \mathbb{E}[\eta] = \sum^C_{i=1} \left( \sum^L_{\ell = 1} \sum^{M}_{m}p_{m}^{B_i} \frac{\mathds{1}_{\ell }(a_{m}^{B_i} )}{R^\ell} \right)p_{B_i}.
\end{align*}
Lastly, take the gradient of $\E[\eta]$ with respect to $p_B$.
\end{proof}

\begin{example}
Consider a scenario where $L=2$ such that a charging facility offers only a slow and fast charging rate, and therefore there are $C = 3$ possible user sub-populations  and $\mathcal{P}\left(\mathcal{L}\right\} = \left\{B_1, B_2, B_3 \right\} = \left\{\{ 1\}, \{2\},\{1,2\}\right\} $ is the set of the sub-populations. Furthermore, from Assumption \ref{assump: usersubpopulations} recall $p_B = [p_{B_1}, p_{B_2}, p_{B_3}]$, and we suppose that $p_{B_2}= 0$ , i.e., no users are restricted to choosing only service level 2, and we remove this entry from $p_B$.

Suppose that the impatience profile of all the user sub-populations are the same such that there are $M = 2$ impatience profiles where $p = p^{B_1}  = p^{B_3} = [0.5, 0.5]$ and $a = a^{B_1}  = a^{B_3}$. Additionally, suppose $\ubar{\alpha}^1 < a_{1} < \bar{\alpha}^1$
and let $\ubar{\alpha}^2 < a_{2} < \bar{\alpha}^2$. 

Then, using Proposition \ref{proposition: grad_expect_pB_subpop}, we compute the gradient of the expected occupancy as
\begin{align}
    \label{example: gradient}
    \nabla_{p_B} \mathbb{E}[\eta] = \begin{bmatrix}
    \frac{0.5}{R^1} ,
    \frac{0.5}{R^2}
    \end{bmatrix}^\top.
\end{align}

In \eqref{example: gradient} we see that the expected occupancy varies linearly with $p_B$. Now consider the case where $p_{B_3} \rightarrow 0$ and $p_{B_1} \rightarrow 1$, i.e., the sub-population of users that use both charging levels disappears and the facility exclusively services users who can only choose the slowest service level. From \eqref{example: gradient} we see the expected occupancy will increase since the gradient for sub-population $B_1$ is greater than the gradient for $B_3$. 
\end{example}

Users arriving at an EV charging facility make the choice of service level based on their energy demand $x_j$ and impatience factor $\alpha_j$ as detailed in Section \ref{sect: problem_formulation}. The paper \citep{santoyo2020pricing} analyzes the EV charging problem with the assumption that users always remain until they receive a full charge. Consider the possibility that users are able to depart from the charging facility before the completion of their charging demand. In this scenario, arriving users make their service level choice, i.e., choice of charging rate and energy price, as before by solving \eqref{eq: level_func}. However, users now exhibit a new random behavior in the form of an early departure. This is formalized in the following assumption

\begin{assump}
\label{assump: define_partial_rv}
Define $\theta_j$ to be an i.i.d. random variable which defines the proportion of the desired charge $x_j$ that is received by user $j$. In particular, define the PDF of $\theta_j$ to be $f_\Theta(\theta)$ which has support on $[0, 1]$. Note that $x_j$ is independent of $\theta_j$. Furthermore, define $\theta_j x_j$ to be the amount of charge received by arriving user $j$. Then, user $j$ occupies a charger at the charging facility during the time interval $[\tau_j, \tau_j + \theta_j x_j/r_j]$.
\end{assump}


\begin{proposition}
\label{prop: early_departure}
Consider an EV charging facility operating under Assumptions \ref{assump: rv_arrival}, \ref{assump: ordering_of_func},  \ref{assump: ratio_cannot_equal_a_i}, and \ref{assump: define_partial_rv} such that users choose a service level according to \eqref{eq: level_func} with the possibility of a user only fulfilling $\theta_j x_j$ of charging demand at arrival. Then, we have that 
\begin{equation*}
    \frac{d\E[\eta;p,a]}{d\E[\theta]} = \lambda \E[x] \E[\frac{1}{r};p,a].
\end{equation*}
\end{proposition}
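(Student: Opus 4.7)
The plan is to express $\E[\eta;p,a]$ explicitly as a linear function of $\E[\theta]$ by mirroring the derivation already used in Section~\ref{subsect: worst_case_err_bound}, augmented with the new random factor $\theta_j$, and then differentiate.

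First, I would invoke Little's law (applicable here since arrivals form a Poisson process by Assumption~\ref{assump: rv_arrival} and each user's sojourn time is an i.i.d.\ random variable by Assumption~\ref{assump: define_partial_rv}). Under Assumption~\ref{assump: define_partial_rv} a user's occupancy interval has length $\theta_j x_j / r_j$ rather than $x_j / r_j$, so Little's law yields
\begin{equation*}
    \E[\eta; p, a] = \lambda \E\!\left[\frac{\theta_j x_j}{r_j}; p, a\right].
\end{equation*}

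Second, I would factor this expectation using independence. The key observation, already implicit in the proof of Proposition~\ref{proposition:maxdiff_expectation}, is that the minimizer $S(x_j,\alpha_j)$ in \eqref{eq: level_func} depends only on $\alpha_j$: because $g_\ell(x_j,\alpha_j) = x_j(V^\ell + \alpha_j/R^\ell)$ and $x_j>0$, the optimal level is $\argmin_\ell (V^\ell + \alpha_j/R^\ell)$, which is a function of $\alpha_j$ alone. Hence $r_j$ is a deterministic function of $\alpha_j$, and is therefore independent of $x_j$ by Assumption~\ref{assump: rv_arrival}. Moreover, $\theta_j$ is i.i.d.\ and independent of $x_j$ (stated in Assumption~\ref{assump: define_partial_rv}), and independent of $\alpha_j$ (and hence of $r_j$) by the i.i.d.\ convention adopted for user parameters. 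Combining these three independences gives
\begin{equation*}
    \E\!\left[\frac{\theta_j x_j}{r_j}; p, a\right] = \E[\theta]\,\E[x]\,\E\!\left[\frac{1}{r}; p, a\right],
\end{equation*}
so that $\E[\eta; p,a] = \lambda\, \E[\theta]\, \E[x]\, \E[1/r; p,a]$. Differentiating with respect to $\E[\theta]$ — noting that $\lambda$, $\E[x]$, and $\E[1/r; p,a]$ do not depend on $\E[\theta]$ since the service-level choice is governed only by $\alpha_j$ — yields the claimed formula.

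The main obstacle, though minor, is justifying the independence of $\theta_j$ from $r_j$: this is not stated outright but follows because $r_j$ is a deterministic function of $\alpha_j$ and $\theta_j$ is an independent i.i.d.\ random variable. Once this is in place, the proof collapses into the same product structure used throughout Section~\ref{sect: main_results}, with $\E[\theta]$ simply appearing as an extra multiplicative factor whose derivative is trivial.
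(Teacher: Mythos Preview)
Your proposal is correct and follows the same route as the paper: write $\E[\eta;p,a]=\lambda\,\E[\theta]\,\E[x]\,\E[1/r;p,a]$ and differentiate with respect to $\E[\theta]$. The paper's proof is a one-liner that simply asserts this factored form as the definition of the expected occupancy, whereas you spell out the Little's-law and independence justifications that make it hold.
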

\begin{proof}
The proof follows immediately from the definition of the expected occupancy $\E[\eta; p,a] = \lambda \E[x] \E[\theta] \E[\frac{1}{r};p,a]$.
\end{proof}
From Proposition \ref{prop: early_departure}, a charging facility operator sees that the expected occupancy varies linearly with changes in the expected proportion of charge received by users at the charging facility. Furthermore, irrespective of the distribution of $\theta_j$, the expected occupancy will vary when the mean of the distribution varies.

\section{Numerical Study}
\label{sect: case_study}
In this section, we present a numerical study that illustrates different phenomena of which a charging facility operator must be mindful when choosing to provide a set of prices and charging rates. Specifically, we demonstrate practical occurrences of the results of Proposition~\ref{proposition:maxdiff_expectation} and~\ref{proposition: grad_expect_pB_subpop}, and Theorem~\ref{theorem: E_derivative_P} and~\ref{theorem: E_derivative_a}.\footnote{The Python code for this case study is available at \url{https://github.com/gtfactslab/evcharging_sensitivity}}. We first present the charging facility and user parameters in Section \ref{subsect: charging_facility_setup}. Then in Section \ref{subsect: mischaracterized_user_impatience} we consider the case when a charging facility is deciding between two pricing schemes such that the estimated expected occupancy is robust to mischaracterizations of the user impatience profiles. In this first case, no user sub-populations are considered, i.e., all arriving users can choose any of the offered service levels. Lastly in Section \ref{subsect: heterogeneous_pop_setup} we add the notion of user sub-populations  and study the effect of user choice limitations on their service level choice probabilities; similar to the first part, we identify the robust pricing scheme to uncharacterized homogeneity. 

\subsection{Charging Facility and User Parameters}
\label{subsect: charging_facility_setup}
Consider a charging facility offering $L = 3$ service levels that is deciding between offering two sets of prices and charging rates to meet this constraint. Specifically, this charging facility can either operate under the pricing scheme $A$ or $B$ whose parameters are detailed in Table \ref{tab:pricing_schemes_parameters}. 
\begin{table}[]
    \centering
    \caption{Pricing and Charging Rates for Pricing Schemes}
    \begin{tabular}{l@{\hspace{4pt}}c@{\hspace{4pt}}c}
          & Pricing Scheme A &  Pricing Scheme B\\\hline
    Charging Rates (kW.)   & $15,30,35$ & $15,30,35$  \\
    Prices  (\$/kWh.) & $0.15,0.25,0.32$  & $0.05, 0.25, 0.33$ \\ \hline
    \end{tabular}
    \label{tab:pricing_schemes_parameters}
\end{table}


In this particular case, a charging facility has a fixed set of charging rates to offer at each service level but is deciding what prices to charge for energy at each service level. At this charging facility user arrivals are a Poisson process with $\lambda = 30$ EVs/hr. Furthermore, user's energy demands (in kWh.) are distributed with $x_j \sim f_X(x_j)$ where $f_X(x_j) = \mathcal{U}(5, 100)$ where $\mathcal{U}(\cdot)$ denotes a uniform distribution.

The charging facility operator estimates user's impatience factor to be a PMF that is defined to be $p_A(\alpha; \tilde{p}, \tilde{a})$. We  consider the case when there are $M=4$ impatience profiles that describe the user population and suppose that the charging facility is correctly estimating that $M=4$ profiles also exist such that $\tilde{a} = [\tilde{a}_1, \tilde{a}_2, \tilde{a}_3, \tilde{a}_4]^\top$. 

While a charging facility may estimate the user impatience profiles to be discrete random variables with PMF parameters $\tilde{p}$ and $\tilde{a}$ this may not be an accurate assessment. Given the aforementioned prices and charging rates for pricing schemes $A$ and $B$ we consider the following two potential cases a charging facility may experience: first, a case when the true distribution of the impatience of users is in fact a discrete random variable with $p = \tilde{p}$ and $a \neq \tilde{a}$ and secondly when the true distribution is not a discrete random variable but in fact a bounded multi-modal normal distribution. In considering these cases, we are able to illustrate which of the two pricing schemes is robust to these types of errors by quantifying the error in the expected occupancy from the PMF estimate the facility has made and the two scenarios for the true distributions of the user impatience.  

First consider the case when a charging facility has estimated the user impatience PMF to have parameters $\tilde{p} = [0.25,0.25,0.25,0.25]^\top$ and $\tilde{a} = [2, 10, 20, 25]^\top$ but in reality the users impatience distributions are $p = \tilde{p}$ but $a = [2, 5.5, 20, 25]^\top$. The discrepancy between $a$ and $\tilde{a}$  will result in differing values for $\E[\eta; p, a]$ and $\E[\eta; \tilde{p}, \tilde{a}]$, respectively. 

We present Figure \ref{fig:discrete_discrete_impatience} which illustrates $p_A(\alpha_j; \tilde{p}, \tilde{a})$ and $p_A(\alpha_j; p, a)$ in the top and bottom plots, respectively. Furthermore, the respective values $(\ubar{\alpha}^k, \bar{\alpha}^k)$ computed for these two scenarios for both pricing schemes are displayed. Recall from Lemma \ref{lemma: probability_choose_level_special_case} that the probability mass of $p_A(\alpha_j; \tilde{p}, \tilde{a})$ or $p_A(\alpha_j; p, a)$ which falls in the non-empty intervals $(\ubar{\alpha}^k, \bar{\alpha}^k)$ determines the probability of users choosing service level $k$. As a result a discrepancy between $a$ and $\tilde{a}$ can lead to different expected occupancy between what the charging facility estimates and what actually occurs in practice. While problematic, a charging facility can mitigate such a discrepancy by ultimately choosing a pricing scheme that is resilient to such mischaracterizations. 

\subsection{Occupancy Error from Mischaracterized User Impatience}
\label{subsect: mischaracterized_user_impatience}
Given the pricing schemes $A$ and $B$ with $\tilde{a}$ we illustrate the intervals $(\ubar{\alpha}^k, \bar{\alpha}^k)$  in the plots of Figure \ref{fig:discrete_discrete_impatience}. Recall $S(x_j, \alpha_j)$ from \eqref{eq: level_func}, then the choices made by user $j$ in pricing schemes $A$ and $B$ are $S_A(x_j, \alpha_j)$ and $S_B(x_j, \alpha_j)$, respectively. Then, a charging facility with estimate $\tilde{a}$ concludes that $\P(S_A(x_j, \alpha_j) = 1) = \P(S_A(x_j, \alpha_j) = 2) = 0.25$ and $\P(S_A(x_j, \alpha_j) = 3) = 0.50$, and $\P(S_B(x_j, \alpha_j) = 1) = \P(S_B(x_j, \alpha_j) = 2) = 0.25$ and $\P(S_B(x_j, \alpha_j) = 3) = 0.50$. 

In reality, the true impatience values $a$ demonstrated in the bottom plot of Figure \ref{fig:discrete_discrete_impatience} show that $\P(S_A(x_j, \alpha_j) = 1) = \P(S_A(x_j, \alpha_j) = 2) = 0.25$ and $\P(S_A(x_j, \alpha_j) = 3) = 0.50$ and that $\P(S_B(x_j, \alpha_j) = 1) = 0.50 $, $\P(S_B(x_j, \alpha_j) = 2) = 0$ and $\P(S_B(x_j, \alpha_j) = 3) = 0.50$. In practice, for pricing scheme $B$ this states that even though a charging facility is offering 3 service levels, only 2 of them will be chosen by users. Numerically, this leads to an over $20\%$ error between $\E[\eta; p,a]$ and $\E[\eta; \tilde{p},\tilde{a}]$ when a charging facility chooses pricing scheme $B$ due to the charging facility  underestimating the expected occupancy. However, in this case, no error arises when using pricing scheme $A$.

While this is illustrative, this represents the considerations a charging facility must make when setting energy prices or charging rates to make their pricing scheme resilient to mischaracterizations of users' impatience. In analyzing the variability of the expected occupancy when utilizing $\tilde{a}$ or $a$ we see an illustration of Statement~1 of Theorem~\ref{theorem: E_derivative_a}. Specifically, if $a_1 > \bar{\alpha}^1$  for pricing scheme $B$ then Statement~1 of Theorem \ref{theorem: E_derivative_a} would predict that the expectation would have remained the same. However, since $a_1 < \bar{\alpha}^1$ for pricing scheme $B$ the difference in the expectation is as predicted in Statement 2 of Theorem \ref{theorem: E_derivative_a}. We illustrate the variation of the expected occupancy with varying $\tilde{a}$ for both pricing schemes in Figure \ref{fig:step_plot} when both the estimated and true impatience are discrete random variables. Lastly, consider the hypothetical scenario where $\tilde{a} = a$ but $\tilde{p} \neq p$, then we see from analyzing Figure \ref{fig:discrete_discrete_impatience} that $\E[\eta; p,a]$ will vary linearly with $p$.

\begin{figure}
    \centering
    \begin{tikzpicture}
    \begin{axis}
    [title=User Service Level Choice Regions \\ with Estimated Impatience,
    title style={align = center, yshift=1.5ex},
     xlabel={$\alpha_j$ [\$/hr.]},
    xmin = 0,
    xmax = 28,
    xtick={0,5,10,15,20,25},
    ymin= 0,
    ymax=0.3,
    width=7cm,
    height=3cm,
    clip=false,
    ylabel={$p_A(\alpha_j; \tilde{p}, \tilde{a})$},
    ylabel near ticks,
     set layers,
    axis on top,
    scale only axis]
    \addplot+[ycomb, red!50, mark options={red!50}, line width=0.5mm] plot coordinates
    	{(2,0.25) (10,0.25) (20,0.25) (25,0.25)}; 
    	
    \draw [dashed, color=mycolor1, line width=0.25ex] (axis cs:3.0,0) -- ++(axis cs:0,0.3) node[above] {$\bar{\alpha}^1 $};
    
    \draw [dashed, color=mycolor1, line width=0.25ex] (axis cs:14.7,0) -- ++(axis cs:0,0.3) node[above] {$\bar{\alpha}^2 $};
    \draw [dotted, line width=0.25ex, color=mycolor2] (axis cs:6.0
    ,0) -- ++(axis cs:0,0.3) node[above] {$\bar{\alpha}^1 $};
    
    \draw [dotted, line width=0.25ex,color=mycolor2] (axis cs:16.80,0) -- ++(axis cs:0,0.3) node[above] {$\bar{\alpha}^2 $};

    \end{axis}
    \end{tikzpicture}
    
    \vspace{0.5em}
    \begin{tikzpicture}
    \begin{axis}[title=User Service Level Choice Regions \\ with True Impatience, 
    title style={align = center, yshift=1.5ex},
    xlabel={$\alpha_j$ [\$/hr.]},
    xmin = 0,
    xmax = 28,
    xtick={0,5,10,15,20,25},
    ymin= 0,
    ymax=0.3,
    width=7cm,
    height=3cm,
    clip=false,
    ylabel={$p_A(\alpha_j; p, a)$},
    ylabel near ticks,
     set layers,
    axis on top,
    legend columns=2,
    scale only axis,
    legend style={at={(0.5,-0.4)},anchor=north}]
    \addplot+[ycomb, ycomb, red!50, mark options={red!50}, line width=0.5mm, forget plot] plot coordinates
    	{(2,0.25) (5.5,0.25) (20,0.25) (25,0.25)}; 
    	
    \draw [dashed, color=mycolor1, line width=0.25ex] (axis cs:3.0,0) -- ++(axis cs:0,0.3) node[above] {$\bar{\alpha}^1 $};
    
    \draw [dashed, color=mycolor1, line width=0.25ex] (axis cs:14.7,0) -- ++(axis cs:0,0.3) node[above] {$\bar{\alpha}^2 $};
    \draw [dotted, line width=0.25ex, color=mycolor2] (axis cs:6.0
    ,0) -- ++(axis cs:0,0.3) node[above] {$\bar{\alpha}^1 $};
    
    \draw [dotted, line width=0.25ex,color=mycolor2] (axis cs:16.80,0) -- ++(axis cs:0,0.3) node[above] {$\bar{\alpha}^2 $};
    
    \addlegendimage{dashed, color=mycolor1,line width=0.25ex};
    \addlegendentry{Pricing Scheme A};
    \addlegendimage{dotted, line width=0.25ex,color=mycolor2};
    \addlegendentry{Pricing Scheme B};
    \end{axis}

    \end{tikzpicture}

    \caption{A charging facility 
    decides between pricing scheme A or B when it has estimated the user impatience as in the top plot of this figure. In reality, we suppose the user impatience is as shown in the bottom plot of this figure. The discrepancy between the estimated impatience values $\tilde{a}$ and the true impatience values $a$ leads to the charging facility having an incorrect estimate of the expected occupancy. To mitigate this, a charging facility can choose its prices and charging rates such that their estimate of the expected occupancy is resilient to impatience mischaracterizations. As described in Section \ref{sect: case_study}, pricing scheme A leads to the same expected occupancy under both the true and estimated impatience, while pricing scheme B leads to expected occupancy that is over 20\% larger for the true impatience distribution compared to the estimated impatience distribution.}
    \label{fig:discrete_discrete_impatience}
\end{figure}
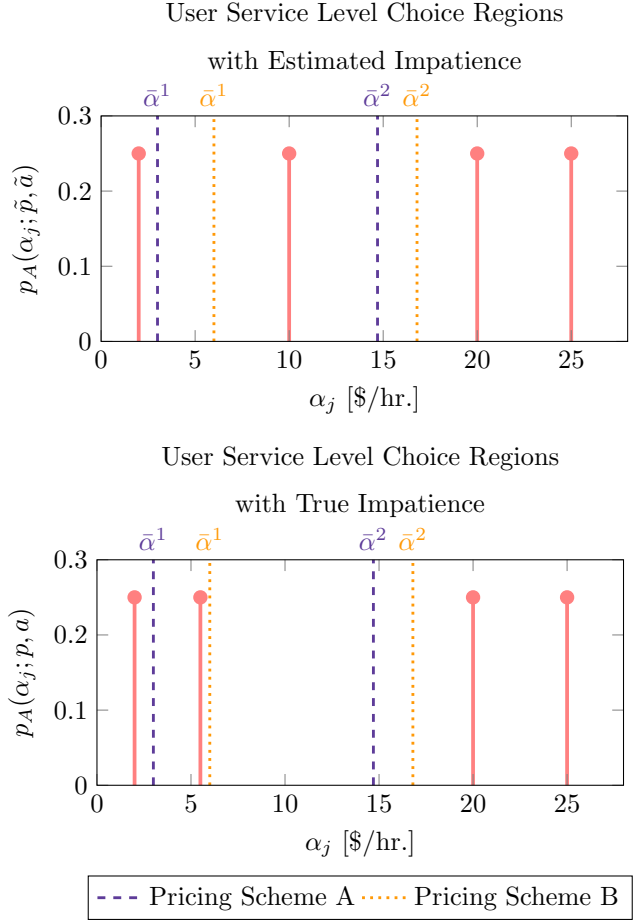

\begin{figure}
    \centering
    \begin{tikzpicture}
    
    \begin{axis}[
    title={Expected Occupancy for Pricing Scheme $A$ and $B$ \\ with Varying Impatience},
    title style={align = center, yshift=1.5ex},
    ylabel={$\mathbb{E}[\eta; \tilde{p},\tilde{a}]$},
    ylabel near ticks,
    xlabel={$\tilde{a}_2$ [\$/hr.]},
    xmin = 2,
    xmax = 20,
    ymin = 55,
    ymax= 80,
    width=7cm,
    height=4cm,
    legend columns=2,
    scale only axis,
    legend style={at={(0.5,-0.5)},anchor=north}
    ]
    \addplot[mark=none, color=mycolor1, line width = 0.25ex] table {plotdata/E_A.txt};
    \addlegendentry{Pricing Scheme $A$};

    \addplot[dashed, mark=none, color=mycolor2,  line width = 0.35ex] table {plotdata/E_B.txt};
    \addlegendentry{Pricing Scheme $B$};

    \end{axis}
    \end{tikzpicture}
    \caption{An EV charging facility where users arrive according to a Poisson process with random demand and impatience decides between a Pricing Scheme $A$ and $B$. At arrival, users choose a charging rate from a collection of service levels that minimizes the total cost to themselves that includes their impatience. A charging facility has estimated the impatience PMF for the arriving users and uses this estimate to compute the expected occupancy for both pricing schemes. In this plot, we illustrate the variation in the expected occupancy when $\tilde{a} = [2, \tilde{a}_2, 20, 25]^\top$. As a charging facility's estimate of $\tilde{a}_2$ varies so does the expected occupancy under each pricing scheme.}
    \label{fig:step_plot}
\end{figure}
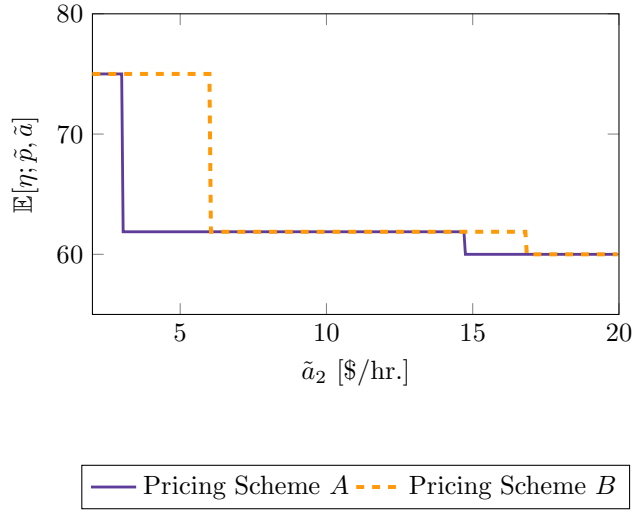

Secondly, we consider the case when a charging facility has estimated the impatience to be a discrete random variable as before but in reality the true user impatience distribution is a truncated multi-modal normal distribution $f_A(\alpha_j)$. Given the multi-modal normal distribution illustrated in Figure \ref{fig:discrete_continuous_impatience} we have that $\P(S_A(x_j, \alpha_j) = 1) = 0.221$, $\P(S_A(x_j, \alpha_j) = 2) = 0.281 $ $\P(S_A(x_j, \alpha_j) = 3) = 0.498$ and that $\P(S_B(x_j, \alpha_j) = 1) = 0.431 $, $\P(S_B(x_j, \alpha_j) = 2) = 0.070$ and $\P(S_B(x_j, \alpha_j) = 3) = 0.499$. Numerically, this leads to an over $15\%$ error in the expected occupancy between the estimated and true user impatience 
when a charging facility chooses pricing scheme $B$ due to the charging facility underestimating the expected occupancy. Furthermore, pricing scheme $A$ now has an approximately $3\%$ error from the true value where the charging facility has overestimated the expected occupancy. The increase in error in pricing scheme $A$ in this scenario is due to the portion from the first mode of $f_A(\alpha_j)$ that is in $(\ubar{\alpha}^2, \bar{\alpha}^2)$, i.e., $(\bar{\alpha}^1, \bar{\alpha}^2)$, of pricing scheme $A$ and the portion of the second mode of $f_A(\alpha_j)$ from the first mode that is in $(\ubar{\alpha}^2, \bar{\alpha}^2)$, i.e., $(\bar{\alpha}^1, \bar{\alpha}^2)$, of pricing scheme $B$. 

We present the numerical values from this part of the numerical study in Table \ref{tab:expect_num_vehicles_cases}. In both the scenarios where the true PMF is a discrete random variable and the one where the true distribution of the impatience is a probability density function we see that pricing scheme $A$ is more resilient to mischaracterizations of $p_A(\alpha_j; \tilde{p}, \tilde{a})$ (or $f_A(\alpha_j)$) than pricing scheme $B$. In fact, pricing scheme $B$ leads the charging facility operator to estimate a lower occupancy when in reality the value is higher. This potentially overburdens the facility's space resources. In practice, an operator gains insightful information from the operation of a charging facility on users' impatience that will guide them in choosing mischaracterization-resilient prices and charging rates. 

\subsection{Heterogeneous Charging Populations}
\label{subsect: heterogeneous_pop_setup}
Similar to Section \ref{subsect: mischaracterized_user_impatience} we study the sensitivity of the expected occupancy at the charging facility to the arrival of heterogeneous charging populations when the impatience profiles are and are not mischaracterized. Specifically, the charging facility evaluates the robustness of pricing schemes $A$ and $B$ defined in Table \ref{tab:pricing_schemes_parameters} to the presence of heterogeneous populations.

Recall Assumption \ref{assump: usersubpopulations}, then given $L = 3$, $C = 7$ is the total number of possible choice subset combinations for the offered service levels and  $\mathcal{P}\left(\mathcal{L}\right) = \left\{\{1\}, \{2\}, \{3\}, \{1,2\}, \{1,3\}, \{2,3\}, \{1,2,3\} \right\}$. Define $\tilde{p}_B = [0,0,0,0,0,0,1]$ to be the $\mathcal{P}\left(\mathcal{L}\right)$ sub-population probability vector estimated by the charging facility. In practice, this means that the charging facility is expecting all of the arriving users at the charging facility to be able to use all of the offered charging levels; however, this may not be the case due to the technological limitations of specific EVs \citep{afdc_energygov}.

In reality, any of the arriving users at this charging facility may only be able to charge with any of the charging level combinations $B_i \in \mathcal{P}(\mathcal{L})$. Consider the case where not all users may choose all of the offered charging levels; specifically, define the true sub-population probabilities to be $p_B = [0.25, 0, 0, 0.25, 0, 0, 0.50]$, i.e., one fourth of the population can select $B_1 = \{1\}$, one fourth can select $B_4 = \{1,2\}$ and the remaining half can select $B_7 = \{1,2,3\}$. 

A charging facility can determine the robustness of one set of charging rates to variability in the expected occupancy with heterogeneous charging populations and mischaracterized impatience profiles. Since the charging facility has estimated that all users can make use of all of the offered charging rates Figure \ref{fig:discrete_discrete_impatience} illustrates the regions of choice for both pricing scheme $A$ and $B$ along with the estimated impatience $\tilde{p}, \tilde{a}$ and estimated sub-population probabilities $\tilde{p}_{B}$. Specifically, a charging facility computes that  $\P(S_A(x_j, \alpha_j) = 1 \mid B_7) = \P(S_A(x_j, \alpha_j) = 2 \mid B_7) = 0.25$ and $\P(S_A(x_j, \alpha_j) = 3\mid B_7) = 0.50$, and $\P(S_B(x_j, \alpha_j) = 1\mid B_7) = \P(S_B(x_j, \alpha_j) = 2\mid B_7) = 0.25$ and $\P(S_B(x_j, \alpha_j) = 3\mid B_7) = 0.50$. Note that since the charging facility is operating under the perception that all of the arriving users can choose from any of the offered service levels the conditioning on $B_7$ can be dropped.

In Figure \ref{fig: heterogeneous_true} we see the service level choice regions for the true impatience profile $p, a$ and true sub-population probabilities $p_B$. Notice in Figure \ref{fig: heterogeneous_true} that the choice region for service level 2 has disappeared in the left plot.  
For pricing scheme A, we have $\P\left(S_A(x_j,\alpha_j) = 1 \mid B_1 \right) = 1.0$, $\P\left(S_A(x_j,\alpha_j) = 1 \mid B_4 \right) = 0.25$,  $\P\left(S_A(x_j,\alpha_j) = 2 \mid B_4 \right) = 0.75$, $\P\left(S_A(x_j,\alpha_j) = 1 \mid B_7 \right) = \P\left(S_A(x_j,\alpha_j) = 2 \mid B_7 \right) = 0.25$ and $\P\left(S_A(x_j,\alpha_j) = 3 \mid B_7 \right) = 0.50$. Then, using the law of total probability one computes that $\P\left(S_A(x_j,\alpha_j) = 1 \right) = 0.4375$, $\P\left(S_A(x_j,\alpha_j) = 2 \right) = 0.3125$, and $\P\left(S_A(x_j,\alpha_j) = 3 \right) = 0.25$. 

For pricing scheme B, we have that $\P(S_B(x_j,\alpha_j) = 1 \mid B_1 ) = 1.0$, $\P(S_B(x_j,\alpha_j) = 1 \mid B_4) = 0.50$,  $\P\left(S_B(x_j,\alpha_j) = 2 \mid B_4 \right) = 0.50$, $\P(S_B(x_j,\alpha_j) = 1 \mid B_7) =0.50 $, $\P\left(S_B(x_j,\alpha_j) = 2 \mid B_7 \right) = 0$ and $\P\left(S_B(x_j,\alpha_j) = 3\mid B_7\right)=0.50$. Then, for pricing scheme B using the law of total probability we have that $\P\left(S_B(x_j,\alpha_j) = 1 \right) = 0.625$, $\P\left(S_B(x_j,\alpha_j) = 2 \right) = 0.125$, and $\P\left(S_B(x_j,\alpha_j) = 3 \right) = 0.25$.

Given these parameters and probabilities, a charging facility studies how each pricing scheme performs in terms of leading to a correct estimate of the expected occupancy. Specifically, pricing scheme $A$ has an approximate error of approximately $15\%$ while pricing scheme $B$ has an under-approximation error of approximately $25\%$. In this particular scenario we see that the prices and rates offered by pricing scheme $A$ are more robust to mischaracterizations for both the users impatience and sub-population probabilities. 

In a similar fashion, one can study the a scenario where a charging facility operates with correct knowledge of the user impatience $p$ and $a$ while having $\tilde{p}_B \neq p_B$. In this scenario, pricing scheme $A$ has an expected value error of approximately $15\%$ which pricing scheme $B$ has an approximate error of $10\%$. In this case, pricing scheme $B$ is more robust to this particular mischaracterization of the user sub-populations.

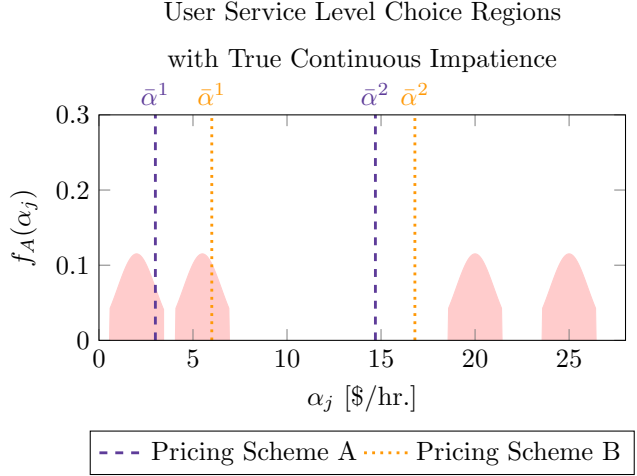
\begin{figure}
    \centering
    \begin{tikzpicture}
    \begin{axis}
    [title={User Service Level Choice Regions \\ with True  Continuous Impatience},
    title style={align=center, yshift=1.5ex},
    xlabel={$\alpha_j$ [\$/hr.]},
    xmin = 0,
    xmax = 28,
    xtick={0,5,10,15,20,25},
    ymin= 0,
    ymax=0.3,
    width=7cm,
    height=3cm,
    clip=false,
    ylabel={$f_A(\alpha_j)$},
    ylabel near ticks,
    set layers,
    axis on top,
    scale only axis,
    legend columns=2,
    legend style={at={(0.5,-0.4)},anchor=north}
    ]

    \addplot+[mark=none, area legend, color=red!20, fill=red!20, forget plot] table[x index=0, y expr={\thisrowno{1}/4}] {plotdata/pmf0.txt}; 
    	
    \addplot+[mark=none, area legend, color=red!20, fill=red!20,, forget plot] table[x index=0, y expr={\thisrowno{1}/4}] {plotdata/pmf1.txt}; 
    
    \addplot+[mark=none, area legend, fill=blue, color=red!20, fill=red!20,, forget plot] table[x index=0, y expr={\thisrowno{1}/4}] {plotdata/pmf2.txt}; 
    
    \addplot+[mark=none, area legend, color=red!20, fill=red!20] table[x index=0, y expr={\thisrowno{1}/4}, forget plot] {plotdata/pmf3.txt}; 
    \draw [dashed, color=mycolor1,line width=0.25ex] (axis cs:3.0,0) -- ++(axis cs:0,0.3) node[above] {$\bar{\alpha}^1 $};
    
    \draw [dashed, color=mycolor1, line width=0.25ex] (axis cs:14.7,0) -- ++(axis cs:0,0.3) node[above] {$\bar{\alpha}^2 $};
    \draw [dotted, line width=0.25ex, color=mycolor2] (axis cs:6.0
    ,0) -- ++(axis cs:0,0.3) node[above] {$\bar{\alpha}^1 $};
    
    \draw [dotted, line width=0.25ex,color=mycolor2] (axis cs:16.80,0) -- ++(axis cs:0,0.3) node[above] {$\bar{\alpha}^2 $};

    \addlegendimage{dashed, line width=0.25ex,color=mycolor1};
    \addlegendentry{Pricing Scheme A};
    \addlegendimage{dotted, line width=0.25ex,color=mycolor2};
    \addlegendentry{Pricing Scheme B};
    \end{axis}
    \end{tikzpicture}

    \caption{A charging facility decides between implementing pricing scheme $A$ or $B$ when it has estimated the user impatience as in the top plot of Figure \ref{fig:discrete_discrete_impatience}. In reality, we suppose the user impatience is a truncated multi-modal normal distribution as is shown in this plot. Hence, a charging facility has a mischaracterized value of the expected occupancy at the charging facility. The discrepancy between the estimated impatience values $\tilde{a}$ and the true impatience value $a$ leads to the charging facility having an incorrect estimate of the expected occupancy at a charging facility. Similar to the case in Figure \ref{fig:discrete_discrete_impatience}, Pricing Scheme A leads to true expected occupancy that is within 3\% of the estimated expected occupancy, while Pricing Scheme B leads to true expected occupancy that is over 15\% higher than estimated.}
    \label{fig:discrete_continuous_impatience}
\end{figure}

\begin{table}[]
    \centering
    \caption{Expected Occupancy for Different Probability Distributions and Pricing Schemes}
    \begin{tabular}{l@{\hspace{4pt}}c@{\hspace{4pt}}c}
          &  Expected Occupancy & Expected Occupancy \\
         Probability Distribution &Pricing Scheme A &  Pricing Scheme B\\\hline
    Estimated Impatience    & $61.875$ & $61.875$ \\
    True Discrete Impatience  & $61.875$ & $75.0$ \\
    True Multi-Modal Impat.   &  $60.38$ & $71.40$ \\ \hline
    \end{tabular}
    \label{tab:expect_num_vehicles_cases}
\end{table}



\begin{figure}
    \centering
     \center{\textbf{User Service Level Choice Regions with True Impatience and Heterogeneous User Populations}}
     \par\medskip
    \begin{tikzpicture}
    \begin{groupplot}[
        group style={
            group name=my plots,
            group size=2 by 1,
            xlabels at=edge bottom,
            ylabels at=edge left,
            horizontal sep=1cm,
            vertical sep=2cm
            },
        xlabel={$\alpha_j$ [\$/hr.]},
        xmin = 0,
        xmax = 28,
        xtick={0,5,10,15,20,25},
        width=0.5\linewidth,
        ymin= 0,
        ymax=0.3,
        clip=false,
        ylabel={$f_A(\alpha_j)$},
        ylabel near ticks,
        set layers,
        axis on top,
        legend columns=2,
        legend style={at={(-0.25,-0.5)},anchor=north},
    ]
        \nextgroupplot[title={$B_4 = \{1,2\}$, $p_{B_4} = \frac{1}{4}$}, title style={yshift=0.15cm, align=center}]
        \addplot+[ycomb, red!50, mark options={red!50}, line width=0.5mm] plot coordinates
    	{(2,0.25) (5.5,0.25) (20,0.25) (25,0.25)}; 
    	
        \draw [dashed, color=mycolor1,line width=0.25ex] (axis cs:3.0,0) -- ++(axis cs:0,0.3) node[above] {$\bar{\alpha}^1 $};
        
        \draw [dotted, line width=0.25ex, color=mycolor2] (axis cs:6.0
        ,0) -- ++(axis cs:0,0.3) node[above] {$\bar{\alpha}^1 $};
        
        \nextgroupplot[title={$B_7 = \{1,2,3\}$, $p_{B_7} = \frac{1}{2}$}, title style={yshift=0.15cm, align=center}]
        \addplot+[ycomb, red!50, mark options={red!50}, line width=0.5mm, forget plot] plot coordinates
    	{(2,0.25) (5.5,0.25) (20,0.25) (25,0.25)}; 
    	
        \draw [dashed, color=mycolor1,line width=0.25ex] (axis cs:3.0,0) -- ++(axis cs:0,0.3) node[above] {$\bar{\alpha}^1 $};
        
        \draw [dashed, color=mycolor1, line width=0.25ex] (axis cs:14.7,0) -- ++(axis cs:0,0.3) node[above] {$\bar{\alpha}^2 $};
        \draw [dotted, line width=0.25ex, color=mycolor2] (axis cs:6.0
        ,0) -- ++(axis cs:0,0.3) node[above] {$\bar{\alpha}^1 $};
        
        \draw [dotted, line width=0.25ex,color=mycolor2] (axis cs:16.80,0) -- ++(axis cs:0,0.3) node[above] {$\bar{\alpha}^2 $};
        
    \addlegendimage{dashed, line width=0.25ex,color=mycolor1};
    \addlegendentry{Pricing Scheme A};
    \addlegendimage{dotted, line width=0.25ex,color=mycolor2};
    \addlegendentry{Pricing Scheme B};
    
    \end{groupplot}
\end{tikzpicture}

    \caption{A charging facility offering $L = 3$ charging levels to arriving users who may not be able to choose from the $L$ offered service levels and whose impatience values may be incorrectly approximated must choose prices and charging rates in a way that the estimated expected occupancy at the charging facility is near the true expected occupancy. In practice, a charging facility must estimate the arriving user's impatience $\tilde{p}$ and the probability the arriving users which can use a subset of the offered service level charging rates $\tilde{p}_B$. In Section \ref{subsect: heterogeneity_user_pop}, we study the effects of the presence of heterogeneous charging populations in addition to mischaracterized user impatience profiles. In this plot, when $p_B = [0.25, 0, 0, 0.25, 0, 0, 0.5]$ we see the how the placement of the impatience mass couples with the user charging population heterogeneity. We see that pricing scheme A leads to an expected occupancy error of around $15\%$ while pricing scheme B has an expected occupancy error of around $25\%$.}
    \label{fig: heterogeneous_true}
\end{figure}
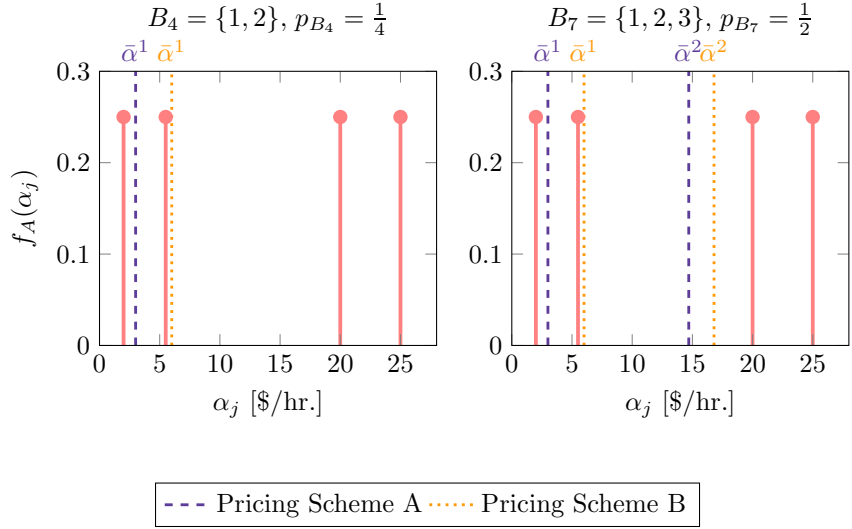

\section{Conclusion}
\label{sect: conclusion}
In this paper, we studied the problem of a charging facility operating with a defined service level model where users arrive randomly with a collection of random parameters. We specifically focus on the case where a charging facility is primarily interested in characterizing the expected occupancy at the charging facility. To compute the expected occupancy, a charging facility uses its knowledge on the distributions of the user arrivals and the respective parameters (energy demand and impatience factor). While useful, these computations are vulnerable to incorrect assessments by the charging facilities of the distribution of user parameters. Specifically, within the model, computing the expected occupancy is highly dependent on having the correct knowledge of the distribution of the user's impatience. As a result, we study the variability in the expected occupancy when the distribution and values of the impatience factor are mischaracterized. Furthermore, we also compute a worst-case error bound for the expected occupancy when the impatience factor is mischaracterized. Lastly, we consider the effects of user population heterogeneity, e.g., not all users being able to use all available charging levels, on the expected occupancy. We study the analytical results via a numerical study that illustrates how a charging facility operator can intelligently set prices and charging rates. While this paper studies the application of an opportunity cost-based user choice model to analyze electric vehicle charging facilities the results are applicable in many settings, where users arrive with demands and perform a personal trade-off between service rate and cost. Cloud computing \citep{chen2019pricing} and ride-sharing services \citep{banerjee2015pricing} among others are examples of when users can face similar trade-off decisions when paying for a service \citep{allon2008service}.

\bibliography{references.bib}  

\end{document}